\newtheorem{lem}{Lemma}
\newtheorem{proposition}{Proposition}
\newtheorem*{definition}{Definition}
\theoremstyle{definition}
\newtheorem*{remark}{Remark}
\newcommand{\Q}{\mathbb{Q}}
\newcommand{\C}{\mathbb{C}}
\newcommand{\R}{\mathbb{R}}
\newcommand{\ZZ}{\mathbb{Z}}
\newcommand{\OO}{\mathcal{O}}
\providecommand{\abs}[1]{\ensuremath{\left\lvert #1 \right\rvert}}
\providecommand{\norm}[1]{\ensuremath{\left\Vert #1 \right\Vert}}
\providecommand{\vv}[1]{\textquotedblleft #1\textquotedblright}
\DeclareMathOperator*{\Vol}{Vol}
\newtheorem{lemma}{Lemma}
\newtheorem{rem}{Remark}
\newtheorem{prop}{Proposition}
\newtheorem{defi}{Definition}
\newtheorem{theorem}{Theorem}
\newtheorem{cor}{Corollary}
\DeclareMathOperator*{\argmin}{argmin}
\title{Algebraic lattice codes for linear fading channels}
\author{Laura Luzzi and Roope Vehkalahti}
\begin{document}

\maketitle

\section{Introduction}
In the decades following Shannon's work, the quest to design codes for the additive white Gaussian noise (AWGN) channel led to the development of a rich theory, revealing a number of beautiful connections between information theory and geometry of numbers. 
One of the most striking examples is the connection between classical lattice sphere packing and the capacity of the AWGN channel. 
The main result  states that any family of lattice codes with linearly growing \emph{Hermite invariant} achieves a constant gap to capacity. These classical results and many more can be found in the comprehensive book by Conway and Sloane \cite{CS}. 


The early sphere packing results suggested that lattice codes could achieve the capacity of the AWGN channel and led  to a series of works trying to prove this, beginning with \cite{DeBuda} and finally completed in \cite{Erez_Zamir}. 
Thus, while there are still plenty of interesting questions to consider, the theory of lattice codes for the single user AWGN channel is now well-established.

However, although the AWGN channel is a good model for deep-space or satellite links, modern wireless communications require to consider more general channel models which
include  time or frequency varying fading and possibly multiple transmit and receive antennas.  Therefore in the last twenty years 
coding theorists have focused on the design of 
lattice codes for multiple and single antenna fading channels \cite{BERB, TSC}.

 However, the question of whether lattice codes can achieve capacity in fading channels has only been addressed recently. 
The first work that we are aware is due to 
S. Vituri 
\cite[Section 4.5]{Vituri},
and gives  a proof of existence of lattice codes achieving a constant gap to capacity for i.i.d SISO channels.  It seems 
that with minor modification this proof is enough 
to guarantee
the existence of capacity achieving lattices. In the single antenna  i.i.d fading channels this problem was considered also in \cite{HNGLOBECOM} and in our paper \cite{ISIT2015_SISO}.

In \cite{Cong} it was shown 
that polar lattices achieve capacity in single antenna  i.i.d fading channels. This is not only an existence result, but does also give an explicit code construction. In \cite{CLB} the authors prove existence of lattice codes achieving capacity for the compound MIMO channel, where  the fading is random during the first $s$ time units, but then gets repeated in blocks of length $s$. This work is most closely related to \cite{OE}, which was considering a similar question.

In   \cite{LV2014,LV2017} we proved that lattice codes derived from number fields and division algebras do achieve a constant gap to capacity over single and multiple antenna fading channels. As far as we know this was the first result achieving constant gap to MIMO capacity with lattice codes. In \cite{HN} the authors corrected and generalized \cite{HNGLOBECOM} and improved on our gap in the case of Rayleigh fading MIMO channels.

However while in our work \cite{LV2017}  the gap to capacity  is relatively large, these codes are 
\emph{almost universal} in the sense that a \emph{single} code 
achieves a constant gap to capacity for \emph{all} stationary ergodic fading channel models satisfying a certain condition for fading \eqref{WLLN}.
With some limitations this gap is also uniform 
for all such channels (see Remark \ref{uniformity}).

In this work we are revisiting  some of the results in \cite{LV2017} and presenting them from a slightly different and more general perspective.
Our approach is based on generalizing the classical sphere packing approach to fading channels.  In \cite{LV2017} we 
introduced the concept of \emph{reduced Hermite invariant}  of a lattice with respect to a linear group of block fading matrices. 
As a generalization of the classical result for AWGN channels, we  
 proved that if  a family  of lattices has linearly growing reduced Hermite invariant, 
 it achieves a constant
 gap to capacity in the 
 block fading MIMO channel. In this work 
 we 
 extend this result 
 and show that given any linearly fading channel model we can define a corresponding notion of reduced Hermite invariant.
 We also prove that in some cases the reduced Hermite invariant of a lattice is actually a \emph{homogeneous minimum} with respect to \emph{homogeneous form} (which depends on the fading model). From this perspective the classical   sphere packing result  \cite[Chapter 3]{CS} is just one  example of the  general 
 connection between linear fading channels and the homogeneous minima of the corresponding forms.

In Section \ref{general_linear_fading}, we begin by defining a general linear fading model, which captures several channels of interest for practical applications. In Section \ref{lattices}, we recall how to obtain a finite signal constellation from an infinite lattice under an average power constraint. 
In Section \ref{AWGN}, we 
review  how the classical Hermite invariant can be used as a design criterion to build capacity approaching lattice codes in AWGN channels. In Section \ref{reduced_section} we generalize the concept of Hermite invariant to linear fading channels by introducing the general reduced Hermite invariant. We will also show that  replacing the Hermite invariant with the reduced one as a code design criterion leads to an analogous capacity result in linear fading channels.

In Section \ref{diagonal} we 
focus on channels where the fading matrices are diagonal. This brings us to consider ergodic fading  single antenna channels.  Following \cite{LV2017}, we 
then show how  lattice constructions from algebraic number fields
can be used to approach capacity in such channels. We begin by considering lattices arising
from the canonical embedding of the
ring of algebraic integers, then examine the question of improving the gap to capacity using non-principal ideals of number fields \cite{ISIT2015_SISO}\footnote{More precisely, the ideal lattice construction was considered in the extended version of \cite{ISIT2015_SISO}, available at http://arxiv.org/abs/1411.4591v2.}.
In particular, we  show that our information-theoretic problem  is  actually equivalent to a certain classical problem in algebraic number theory.

Finally in Section \ref{geometry} we extend  the results in \cite{LV2017} and show that in many relevant channel models the reduced Hermite invariant of a lattice is actually a homogeneous minimum of a certain form.

\section{General linear fading  channel} \label{general_linear_fading}
In this work we consider complex vector-valued channels, where the transmitted (and received) elements are vectors in $\C^k$.
A \emph{code}  $\mathcal{C}_k$  is  a finite  set of elements in  $\C^k$. We assume that both the receiver and the transmitter know the 
code.

Given a matrix $H\in M_n(\C)$ and a vector $x\in \C^k$, in order to hold on the tradition that
a transmitted vector is a row, we 
introduce the notation
$$
H[x]=(H(x^T))^T.
$$

Let us assume we have an infinite sequence of random matrices $H_k$, $k=1,2\dots, \infty$, where for every $k$, $H_k$ is a $k\times k$  matrix. Given such sequence of matrices we can define a corresponding channel.
Given an input $x=(x_1,\dots, x_k)$, we will write the channel output as 
\begin{equation}\label{channelmodel}
y=H_k[x]+w,
\end{equation}
where $w$ is a  a length $k$ random vector, with i.i.d complex Gaussian entries
with variance 1  and zero mean, and $H_k$ is a random matrix representing fading. We assume that the receiver always knows the channel  realization of $H_k$ and is 
trying to guess which was the transmitted codeword $x$ based on $y$ and $H_k$. This set-up 
defines a
\emph{linear fading channel} (with channel state information at the receiver), where the term \vv{linear} simply refers to the fact that the fading can be represented as the action of a linear transform on
the transmitted codeword. This type of channels  (but without channel state information) have been considered 
before in 
\cite{WF}.

In the following sections we consider the problem of designing codes for this type of channels.
In the remainder of the paper we will assume the extra condition that
the determinant of the random matrices $H_k$ is non-zero with probability one.
The channel  model under consideration captures many 
communication channels of practical significance.
For example  when 
$H_k$ 
is a deterministic identity matrix,
we have the classical additive Gaussian channel. On the other hand if 
$H_k$ is a diagonal matrix with
i.i.d Gaussian random elements
with zero mean, we obtain the Rayleigh fast fading channel.
Finally, if  
$H_k$ is a block diagonal matrix we obtain a block
fading MIMO channel.


\section{Lattices and finite codes} \label{lattices}
As mentioned previously, our finite codes  $\mathcal{C}$ are simply subsets of elements  in  $\C^k$.
We consider the ambient space  $\C^k$ as a metric space with the Euclidean
norm.

\begin{defi}\label{euclidean norm}
Let $v=(v_1, ...,v_k)$ be a vector in $\C^k$.
The \emph{Euclidean norm}   of $v$ is $||v||=\sqrt{\sum_{i=1}^k |v_i|^2}$.
\end{defi}

Given a \emph{transmission power} $P$, we require that every codeword  $x\in \mathcal{C}\subset \C^k$ satisfies the average power constraint
\begin{equation} \label{power_constraint}
\frac{1}{k} \norm{x}^2\leq P.
\end{equation}
The \emph{rate} of the code is given by $R=\frac{\mathrm{log}_2\abs{\mathcal{C}}}{k}$. 
 
In this work we focus
on finite codes $\mathcal{C}_k$ that are derived from \emph{lattices}.

A  full {\em lattice} $L \subset \C^k$ has the form $L=\ZZ b_1\oplus \ZZ b_2\oplus \cdots \oplus \ZZ b_{2k}$, where the vectors $b_1,\dots, b_{2k}$ are linearly independent over $\R$, i.e., form a lattice basis.

Given an average power constraint
$P$, the following Lemma suggests that by shifting a lattice and considering its intersection with the $2k$-dimensional ball $B(\sqrt{kP})$ of radius $\sqrt{kP}$, we can have codes having roughly $\Vol( B(\sqrt{kP}))$ elements.  

\begin{lemma}[see \cite{GL}]\label{shift}
Suppose that $L$ is a full lattice in  $\C^k$  and  $S$ is a Jordan-measurable bounded subset of $\C^k$. Then there exists $x\in \C^k$ such that
$$
|(L+x)\cap S|\geq\frac{\mathrm{Vol}(S)}{\mathrm{Vol}(L)}.
$$
\end{lemma}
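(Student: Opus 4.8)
The plan is to prove this via an averaging argument over translates of the lattice, which is a classical technique in the geometry of numbers. The key observation is that although a single shift $x$ may give few or many points of $L+x$ inside $S$, the \emph{average} number of points over all shifts in a fundamental domain of $L$ equals exactly $\mathrm{Vol}(S)/\mathrm{Vol}(L)$, and hence there must exist at least one shift achieving at least this average.

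First I would fix a fundamental parallelepiped $D$ of the lattice $L$, so that $\C^k$ (viewed as $\R^{2k}$) is tiled by the translates $\{D + \ell : \ell \in L\}$ up to measure zero, and $\Vol(D) = \Vol(L)$. For a shift $x$ ranging over $D$, I would count $N(x) := |(L+x)\cap S|$. The crucial step is to integrate $N(x)$ over $x \in D$ and show that
\begin{equation}\label{avg}
\int_{D} N(x)\, dx = \Vol(S).
\end{equation}
To see this, I would write $N(x) = \sum_{\ell \in L} \mathbf{1}_S(x+\ell)$, exchange the (countable) sum with the integral by Tonelli's theorem (the integrand is nonnegative and measurable since $S$ is Jordan-measurable), and observe that $\int_D \mathbf{1}_S(x+\ell)\, dx = \Vol\bigl(S \cap (D+\ell)\bigr)$. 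Summing over $\ell \in L$ and using that the translates $D+\ell$ partition $\C^k$ up to a null set yields $\sum_{\ell} \Vol\bigl(S\cap(D+\ell)\bigr) = \Vol(S)$, which is exactly \eqref{avg}.

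With \eqref{avg} in hand, the conclusion follows by a pigeonhole/mean-value argument: since the average value of the nonnegative integer-valued function $N(x)$ over the region $D$ of volume $\Vol(L)$ is
$$
\frac{1}{\Vol(D)}\int_D N(x)\, dx = \frac{\Vol(S)}{\Vol(L)},
$$
there must exist at least one point $x\in D$ for which $N(x)\geq \Vol(S)/\Vol(L)$. This gives the desired shift. I would note here that because $N(x)$ takes only integer values, one could in fact sharpen the bound to a ceiling, but the stated form suffices.

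The main obstacle, and the only point requiring genuine care, is justifying the interchange of summation and integration in the step leading to \eqref{avg} and confirming that the tiling argument is valid measure-theoretically. This is where the hypothesis that $S$ is Jordan-measurable and bounded does the work: boundedness ensures that only finitely many translates $D+\ell$ meet $S$, so the sum $\sum_{\ell\in L}\mathbf{1}_S(x+\ell)$ is finite for each $x$ and the interchange reduces to a finite one, sidestepping any subtle convergence issues. Jordan-measurability guarantees that $\Vol(S\cap(D+\ell))$ is well defined and that the pieces add up to $\Vol(S)$ without boundary pathologies. Everything else is routine.
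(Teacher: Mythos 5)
Your averaging argument is correct and complete: the identity $\int_D N(x)\,dx=\Vol(S)$ via Tonelli and the tiling by $\{D+\ell\}$, followed by the observation that a function cannot be everywhere strictly below its mean, is exactly the classical proof of this lemma. The paper itself offers no proof and simply cites Gruber--Lekkerkerker, where the same fundamental-domain averaging argument appears, so there is nothing to reconcile; your handling of the measure-theoretic points (boundedness of $S$ reducing the sum to a finite one, Jordan measurability making the volumes well defined) is sound.
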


Let 
$\alpha$ be an 
energy normalization constant and $L$ a $2k$-dimensional lattice in $\C^k$ satisfying $\Vol(L)=1$. According to  Lemma \ref{shift}, we can choose an element $x_R\in \C^k$ such that for the code
\begin{equation} \label{spherical_shaping}
\mathcal{C}=B(\sqrt{kP} )\cap( x_R+\alpha L)
\end{equation}
we have the cardinality bound
\begin{equation}\label{guarantee}
\abs{\mathcal{C}}\geq\frac{\Vol(B(\sqrt{kP}))}{\Vol (\alpha L)}=\frac{C_k P^k}{\alpha^{2k}},
 \end{equation}
where $C_k=\frac{(\pi k)^k}{k!}$. 
We can now see that given  a lattice $L$ with $\Vol(L)=1$, the number of codewords we are guaranteed to get only depends on the size of 
$\alpha$.

From now on, given a lattice $L$ and power limit $P$, the finite codes we are considering 
will always satisfy \eqref{guarantee}.
We note that while the finite codes are not subsets of the scaled lattice $\alpha L$, 
they inherit many properties from the underlying lattice. 



\section{Hermite invariant  in the AWGN  channel} \label{AWGN}

In this section we will present the classical Hermite invariant approach 
to build
capacity approaching codes for the AWGN channel   \cite[Chapter 3]{CS}.
We remark that this   channel can  be seen as an example of our general set-up \eqref{channelmodel} by assuming that for every $k$ the random matrix $H_k$ is a $k\times k$ identity matrix with probability one.
The channel equation can now be written as 
$$
y=  x + w,
$$
where $x \in\mathcal{C}_k \subset \C^k$  is the transmitted codeword and $w$ is the Gaussian noise vector.

After the transmission, the receiver tries to guess which was the transmitted codeword $x$ by performing maximum likelihood (ML) decoding, and outputs
$$\hat{x}=\argmin_{\bar{x} \in \mathcal{C}_k} \norm{y-\bar{x}}=\argmin_{\bar{x} \in \mathcal{C}_k} \norm{x-\bar{x}+w}.$$

 This suggests a simple code design criterion to minimize the error probability. Given a power limit  $P$, the codewords of 
 $\mathcal{C}_k$  should be as far apart as possible. As the properties of the finite code $\mathcal{C}_k$ are inherited from the underlying lattice, we should give a reasonable definition of what it means that lattice points are far apart.

\begin{defi}\label{Hermite}
The \emph{Hermite invariant} of a $2k$-dimensional  lattice $L_{k}\subset \C^k$  
is defined as 
$$
\mathrm{h}(L_k )=\frac{\mathrm{inf}\{\,||x||^2 \mid  x\in L_k , x\neq 0\}}{\Vol(L_k)^{1/k}},
$$
where $\Vol(L_k)$ is the volume of the fundamental parallelotope of the lattice $L_k$.
\end{defi}

\begin{theorem}\label{Gaussian}
Let $L_k \subset \C^k$ be a family of $2k$-dimensional  lattice codes satisfying $\mathrm{h}(L_k)\geq 2kc$.
Then any  rate 
$$
R<   \log_2 P  -\log_2 \frac{2}{\pi e c }
$$
is achievable using the lattices $L_{k}$ 
with ML decoding.
\end{theorem}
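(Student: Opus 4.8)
The plan is to turn the lattice $L_k$, normalized so that $\Vol(L_k)=1$, into a finite constellation exactly as in Section~\ref{lattices}, and then to bound the error probability of ML decoding by that of a suboptimal bounded-distance decoder whose performance is governed entirely by the minimum distance of the scaled lattice $\alpha L_k$. The Hermite invariant hypothesis controls this minimum distance, Lemma~\ref{shift} controls the number of codewords, and the theorem emerges from balancing these two quantities through the single free parameter $\alpha$.

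First I would fix an energy normalization constant $\alpha>0$ and form the spherical code $\mathcal{C}=B(\sqrt{kP})\cap(x_R+\alpha L_k)$ as in \eqref{spherical_shaping}, so that by \eqref{guarantee} we have $\abs{\mathcal{C}}\geq C_k P^k/\alpha^{2k}$ with $C_k=(\pi k)^k/k!$. Since $\mathcal{C}\subset B(\sqrt{kP})$, every codeword automatically obeys the power constraint \eqref{power_constraint}. Taking logarithms and dividing by $k$, the rate satisfies
$$
R \geq \frac{1}{k}\log_2 C_k + \log_2 P - \log_2\alpha^2 .
$$
A Stirling estimate gives $\frac{1}{k}\log_2 C_k = \log_2(\pi e) + O\!\left(\tfrac{\log k}{k}\right)$, so up to vanishing terms the guaranteed rate is $\log_2(\pi e) + \log_2 P - \log_2\alpha^2$.

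Next I would analyze the error probability. Because $\mathcal{C}\subset x_R+\alpha L_k$, its minimum distance satisfies $d_{\min}^2\geq\alpha^2\inf\{\norm{x}^2 : x\in L_k,\, x\neq 0\}=\alpha^2\,\mathrm{h}(L_k)\geq 2kc\,\alpha^2$, using $\Vol(L_k)=1$ in Definition~\ref{Hermite} together with the hypothesis $\mathrm{h}(L_k)\geq 2kc$. Since ML decoding is at least as reliable as bounded-distance decoding with radius $d_{\min}/2$, and the latter succeeds whenever $\norm{w}<d_{\min}/2$, I obtain
$$
P_e \leq \Pr\!\left(\norm{w}^2 \geq \frac{d_{\min}^2}{4}\right) \leq \Pr\!\left(\norm{w}^2 \geq \frac{kc\,\alpha^2}{2}\right).
$$
Here $\norm{w}^2=\sum_{i=1}^k\abs{w_i}^2$ is a sum of $k$ i.i.d.\ unit-mean exponential random variables, so that $\mathbb{E}\norm{w}^2=k$. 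The decisive step is therefore to choose $\alpha$ so that the threshold strictly exceeds this mean: setting $\alpha^2=(2/c)(1+\epsilon)$ makes the threshold equal to $k(1+\epsilon)$, and a Chernoff (large-deviations) bound then forces $P_e\to 0$ exponentially in $k$.

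Finally I would combine the two estimates. With $\alpha^2=(2/c)(1+\epsilon)$ the guaranteed rate becomes $\log_2 P - \log_2\frac{2}{\pi e c} - \log_2(1+\epsilon) - o(1)$, which for any prescribed $R<\log_2 P - \log_2\frac{2}{\pi e c}$ can be made to exceed $R$ by choosing $\epsilon$ small and $k$ large, while simultaneously $P_e\to 0$. I expect the main obstacle to be precisely this tension in the choice of $\alpha$: the error analysis pushes $\alpha^2$ above $2/c$, whereas the rate pushes it down, and one must verify that the two constraints meet exactly at the claimed expression $\log_2\frac{2}{\pi e c}$. The concentration of $\norm{w}^2$ and the Stirling asymptotics of $C_k$ are the supporting technical steps, but both are routine once the parameter balance is set up correctly.
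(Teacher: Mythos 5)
Your proposal is correct and follows essentially the same route as the paper: the same spherical shaping via Lemma \ref{shift}, the same minimum-distance bound $d^2\geq \alpha^2\mathrm{h}(L_k)\geq 2kc\,\alpha^2$, the same choice $\alpha^2=\frac{2(1+\epsilon)}{c}$, and the same Stirling estimate of $C_k$; the only cosmetic difference is that you invoke a Chernoff bound for the concentration of $\norm{w}^2$ where the paper uses the weak law of large numbers for the $\chi^2_{2k}$ variable, which is an equivalent (indeed slightly stronger) step.
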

 
\begin{proof}
Given a power limit $P$, 
recall that the finite codes  we are considering are of the form
$\mathcal{C}=B(\sqrt{kP} )\cap( x_R+\alpha L_k)$. Without loss of generality, we can assume that $\Vol(L_k)=1$. Here 
$\alpha$ is a power normalization constant that we will soon solve and which will define the achievable rate. The minimum distance in the received constellation is
$$d=\min_{\substack{x, \bar{x} \in \mathcal{C}\\ x \neq \bar{x}}} \norm{x- \bar{x}  }.$$
The error probability is upper bounded by
$$P_e \leq \mathbb{P}\left\{ \norm{w}^2 \geq \left(\frac{d}{2}\right)^2\right\}.$$ 
Note that we can lower bound the minimum distance as follows:
$$
d^2 \geq \alpha^2 \min_{x \in L_k \setminus \{0\}} \norm{ x}^2\geq \alpha^2 \mathrm{h}(L_k)\geq\alpha^2 2ck.
$$

Therefore we have the upper bound
\begin{equation} \label{upper}
P_e \leq \mathbb{P}\left\{ \norm{w}^2 \geq \frac{\alpha^2 ck  }{2}    \right\}.
\end{equation}
Let $\epsilon >0$. Since $2\norm{w}^2$ is a $\chi^2$ random variable with $2k$ degrees of freedom, due to the law of large numbers, 
\begin{equation} \label{w}
 \lim_{k \to \infty} \mathbb{P}\left\{\frac{\norm{w}^2}{k} \geq 1+\epsilon\right\}= \lim_{k \to \infty} \mathbb{P}\left\{\frac{2\norm{w}^2}{2k} \geq 1+\epsilon\right\} \to 0
 \end{equation}
Assuming $\alpha^2 = \frac{2(1+\epsilon)}{c}$, we then have that $P_e \to 0$ when $k \to \infty$, and the cardinality bound \eqref{guarantee} implies that
$$
|\mathcal{C}|\geq \frac{C_k P^k}{\alpha^{2k}}=\frac{C_k P^k c^k}{2^{k}(1+\epsilon)^k}.
$$
For large $k$,
$C_k  \approx \frac{(\pi e)^k}{\sqrt{2\pi k}}$ using Stirling's approximation. 

It follows that $\forall \epsilon>0$ we can achieve rate
$$
R= \log_2 P - \log_2 \frac{2(1+\epsilon)}{\pi e c}.
$$
Since $\epsilon$ is arbitrary, this concludes the proof.
\end{proof}

\section{ Hermite invariant in general linear fading model}\label{reduced_section}
In the previous section we saw how the Hermite invariant can be used as a design criterion to build capacity approaching codes in the AWGN channel. 
Let us now 
define 
a generalization of this invariant for
linear fading channels.

Suppose we have an infinite sequence of random matrices $H_k$, $k=1,2\dots, \infty$, where $H_k$ is a $k\times k$ matrix.
Given an input $x=(x_1,\dots, x_k)$, we will write the channel output as 
$$
y=H_k[x]+w,
$$
where $w$ is a length $k$ random vector, with i.i.d complex Gaussian 
entries with variance $1$ per complex dimension.
We assume that the receiver knows the realization of $H$.

Given a channel realization $H$, the receiver outputs the ML estimate 
$$\hat{x}=\argmin_{\bar{x} \in \mathcal{C}} \norm{H[x] + w -H[\bar{x}]}.$$

 From the receiver's perspective 
 this is equivalent to decoding the code 
$$
H[\mathcal{C}]=\{H[x]\mid x\in \mathcal{C}\}
$$
over an AWGN channel.

As we assumed that the finite codes are of the form (\ref{spherical_shaping}), we have
$$
H[\mathcal{C}]\subset \{H[x]\mid x\in x_{R}+\alpha L \}=\{z\mid  z\in H[x_R] +\alpha H[L]\},
$$
where 
$$
H[L]=\{H[x]\mid x\in L\}.
$$
We can now see that the properties of $H[\mathcal{C}]$ are inherited from  the set $H[L]$. 

\smallskip

If we assume that the matrix $H$  has full rank with probability $1$, then the linear mapping $x \mapsto H[x]$ is a bijection of $\C^k$ onto itself with probability $1$. 

Assuming 
that 
$L_{k}\subset\C^k$ has basis $\{b_1,\dots ,b_{2k}\}$ we 
have that
$$
H[L_{k}]=\{H[x]\mid x\in L_{k}\}=\ZZ H[b_1] \oplus \cdots \oplus \ZZ H[b_{2 k}],
$$
is a full-rank lattice with basis $\{H[b_1],\dots, H[b_{2 k}]\}$. Since it is full-rank, we know that $h(H[L_{k}])>0$, but 
is it possible to choose $L_k$ in such a way that 
$h(H[L_{k}])$ would be large irrespective of the channel realization $H$?
 Let us now try to formalize this idea.
 
We can write  the random matrix $H_k $ in the  form
$$
H_k= |\mathrm{det}(H_k)|^{1/k} H_k'
$$
where  $ |\mathrm{det}({H_k}' )|=1$. 
Clearly, if the term $|\mathrm{det}(H)|^{1/k}$
happens to be small, it will crush the Euclidean distances of points in $H[L_{k}]$.  However, 
we will show that if the random matrices $H_k$ are \vv{well behaving}, then
it is possible to design lattices that are robust against 
fading. 


\begin{defi}\label{reduced_Hermite_invariant}
Let $\mathcal{A}$ be a set of invertible matrices such $ \forall A\in \mathcal{A}$, $|\det(A)|=1$. 
The  \emph{reduced Hermite invariant} \cite{LV2017} of a $2k$-dimensional  lattice $L\subset \C^k$ with respect to 
$A$  is defined as 
$$
\mathrm{rh}_{\mathcal{A}}(L)=\underset{A\in \mathcal{A}}{\mathrm{inf}} \{\mathrm{h}(A[L])\}.
$$
\end{defi}

It is easy to see that 
\begin{equation}\label{switch}
\underset{A\in \mathcal{A}}{\mathrm{inf}} \Big\{\underset{x\in L,\; x\neq 0}{\mathrm{inf}} ||A[x]||^2 \Big\}=
\underset{x\in L,\; x\neq 0}{\mathrm{inf}} \Big\{\underset{A\in \mathcal{A}}{\mathrm{inf}}||A[x]||^2 \Big\}.
\end{equation}

This observation 
suggests the following definition.
\begin{defi}
We call 
$$
||x||_{\mathcal{A}}=\mathrm{inf}\{\,||A[x]|| \mid  A\in \mathcal{A} \},
$$
the \emph{reduced norm} of the vector $x$ with respect to  the set $A$.
\end{defi}

With this observation we realize that 
\begin{equation} \label{rh_characterization}
\mathrm{rh}_{\mathcal{A}}(L )=\frac{\mathrm{inf}\{\,||x||_{\mathcal{A}}^2 \mid  x\in L ,\; x\neq 0\}}{\Vol(L)^{1/k}}.
\end{equation}
If the set $\mathcal{A}$ includes the identity matrix, we obviously have 
$$
\mathrm{rh}_{\mathcal{A}}(L )\leq \mathrm{h}(L).
$$


Suppose that $\{H_k\}_{k\in \mathbb{N}^+}$ is a fading process such that $H_k \in M_{k\times k}(\C)$ is full-rank with probability $1$, and suppose that the weak law of large numbers holds for the random variables $\{\log \det (H_kH_k^{\dagger})\}$, i.e. $\exists \mu >0$ such that $\forall \epsilon >0$,
\begin{equation} \label{WLLN}
\lim_{k \to \infty} \mathbb{P}\left\{ \abs{\frac{1}{k}  \log \det(H_kH_k^{\dagger})-\mu}>\epsilon\right\}=0.
\end{equation}

We denote the set  of all 
invertible realizations of  $H_k$ with 
$\mathcal{A}_k^*$. 
Then define
\begin{equation}\label{normalized}
\mathcal{A}_k=\{ |\mathrm{det}(A)|^{-1/k}A\mid A\in \mathcal{A}_k^*\}.
\end{equation}



\begin{theorem}\label{main}
Let $L_k \subset \C^k$ be a family of $2k$-dimensional  lattice codes satisfying $\mathrm{rh}_{\mathcal{A}_k}(L_k)\geq 2kc$, and suppose that the channel satisfies (\ref{WLLN}).
Then any  rate 
$$
R < \log_2 P + \mu -\log_2 \frac{2}{\pi e c}
$$
is achievable using the codes $L_{k}$ with ML decoding. 
\end{theorem}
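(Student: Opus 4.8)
The plan is to reduce the linear fading channel to the pure AWGN setting already analyzed in Theorem~\ref{Gaussian}, isolating the only genuinely new feature: the random determinant factor $|\det(H_k)|^{1/k}$, whose behaviour is precisely what the weak law~\eqref{WLLN} governs. As observed in the text, from the receiver's perspective ML decoding of $\mathcal{C}$ is equivalent to decoding $H[\mathcal{C}] \subset H[x_R]+\alpha H[L_k]$ over an AWGN channel, so the quantity to control is the minimum distance $d$ of this received constellation. Normalizing $\Vol(L_k)=1$ and writing $H_k=|\det(H_k)|^{1/k}H_k'$ with $|\det(H_k')|=1$, I would first note that $H_k'\in\mathcal{A}_k$ by the definition~\eqref{normalized}, whence
$$
d^2 \geq \alpha^2\,|\det(H_k)|^{2/k}\min_{x\in L_k\setminus\{0\}}\norm{H_k'[x]}^2.
$$

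The next step is to absorb the ``shape'' $H_k'$ into the reduced Hermite invariant \emph{uniformly} over all channel realizations. Since $H_k'\in\mathcal{A}_k$, the reduced norm satisfies $\norm{H_k'[x]}\geq\norm{x}_{\mathcal{A}_k}$ for every $x$, so by~\eqref{rh_characterization} and $\Vol(L_k)=1$,
$$
\min_{x\in L_k\setminus\{0\}}\norm{H_k'[x]}^2 \geq \min_{x\in L_k\setminus\{0\}}\norm{x}_{\mathcal{A}_k}^2 = \mathrm{rh}_{\mathcal{A}_k}(L_k)\geq 2kc.
$$
This is the crucial point: passing to the normalized family $\mathcal{A}_k$ removes all dependence on the direction of $H_k$ and leaves only the scalar $|\det(H_k)|^{2/k}=\det(H_kH_k^{\dagger})^{1/k}$ to be handled probabilistically.

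The two independent sources of randomness, fading and noise, are then coupled through a single high-probability event. Fixing $\epsilon>0$, I would introduce the good event $G_k=\{\det(H_kH_k^{\dagger})^{1/k}\geq 2^{\mu-\epsilon}\}$, for which~\eqref{WLLN} gives $\mathbb{P}(G_k)\to 1$. On $G_k$ the two bounds above combine to $d^2\geq\alpha^2\,2^{\mu-\epsilon}\,2kc$, so for every realization $H\in G_k$ the conditional error probability obeys
$$
P_e(H)\leq\mathbb{P}\Big\{\norm{w}^2\geq\frac{\alpha^2\,2^{\mu-\epsilon}ck}{2}\Big\}.
$$
Splitting the total error as $P_e\leq\mathbb{P}(G_k^c)+\sup_{H\in G_k}P_e(H)$, the first term vanishes by the weak law, while the second vanishes by the concentration~\eqref{w} of $\norm{w}^2/k$ about $1$, provided we set $\alpha^2=\frac{2(1+\epsilon)}{c\,2^{\mu-\epsilon}}$ so that the threshold exceeds $(1+\epsilon)k$. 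Here I would use that $w$ is independent of $H_k$, so conditioning on $G_k$ does not alter the Gaussian law of the noise.

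Finally I would read off the rate exactly as in Theorem~\ref{Gaussian}: substituting this $\alpha$ into the cardinality guarantee~\eqref{guarantee} and applying Stirling's approximation $C_k\approx(\pi e)^k/\sqrt{2\pi k}$ gives
$$
R\longrightarrow\log_2 P+(\mu-\epsilon)-\log_2\frac{2(1+\epsilon)}{\pi e c},
$$
and letting $\epsilon\to 0$ yields the claimed rate. I expect the main obstacle to be organizing the coupling of fading and noise so that both error contributions vanish simultaneously; the conceptual resolution is the decoupling in the second step, where normalizing to $\mathcal{A}_k$ lets the determinant be extracted uniformly, reducing everything to the scalar concentration supplied by~\eqref{WLLN}.
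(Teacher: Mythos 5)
Your proposal is correct and follows essentially the same route as the paper's proof: the same lower bound $d_H^2 \geq \alpha^2 \det(HH^{\dagger})^{1/k}\,2kc$ obtained by factoring out the determinant and invoking the reduced Hermite invariant, the same split of the error probability into a noise-concentration term and a determinant-deviation term controlled by \eqref{WLLN}, and the same choice of $\alpha$ followed by the Stirling estimate of $C_k$. The only cosmetic differences are that you phrase the split via an explicit good event $G_k$ and make the independence of $w$ and $H_k$ explicit, where the paper uses the law of total probability and two separate slack parameters $\epsilon,\delta$.
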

 
\begin{proof}
Given a power constraint $P$, recall that we are considering finite codes of the form (\ref{spherical_shaping}), where 
$\alpha$ is a power normalization constant that we will soon solve. 

The minimum distance in the received constellation is
$$d_{H}=\min_{\substack{x, \bar{x} \in \mathcal{C}\\ x \neq \bar{x}}} \norm{H[x- \bar{x}]  }\geq \min_{\substack{x \in L_k \\ x \neq 0}} \norm{H[\alpha x]  },$$
and by the hypothesis on the reduced Hermite invariant,  
$$
d_{H}^2 \geq \alpha^2 \min_{x \in L_k \setminus \{0\}} \norm{ H[x]}^2\geq \alpha^2 \det(HH^{\dagger})^{1/k} \mathrm{rh}_{\mathcal{A}_k}(L_k)\geq\alpha^2 \det(HH^{\dagger})^{1/k}2ck.
$$

The ML 
error probability is
bounded by
$$P_e \leq \mathbb{P}\left\{ \norm{w}^2 \geq \left(\frac{d_{H}}{2}\right)^2\right\}.$$ 

Fixing $\epsilon>0$, the law of total probability implies that 
\begin{align*}
&P_e \leq \mathbb{P}\left\{\frac{d_H^2}{4k} \geq 1+ \epsilon\right\}\mathbb{P}\left\{ \norm{w}^2 \geq \frac{d_H^2}{4}\;\Big |\; \frac{d_H^2}{4k} \geq 1+ \epsilon\right\} + \mathbb{P}\left\{ \frac{d_H^2}{4k}<1+\epsilon \right\}  \\
& \leq \mathbb{P}\left\{ \frac{\norm{w}^2}{k} \geq 1 + \epsilon\right\}+\mathbb{P}\left\{ \frac{d_H^2}{4k}<1+\epsilon \right\} \\
& \leq \mathbb{P}\left\{ \frac{\norm{w}^2}{k} \geq 1 + \epsilon\right\}+\mathbb{P}\left\{ \frac{\alpha^2 c\det(HH^{\dagger})^{1/k}}{2} < 1+\epsilon \right\}
\end{align*}
Recall that the first term tends to zero when $k \to \infty$ due to (\ref{w}). 
The second term will tend to zero as well if we choose 
$$
\log_2\left(\frac{2(1+\epsilon)}{\alpha^2 c}\right)=\mu-\delta
$$
for some $\delta>0$. 
 Equation \eqref{guarantee} gives us that

$$
R=\frac{1}{k}\log_2 \abs{\mathcal{C}} \leq \log_2 P - \log_2 \frac{\alpha^2}{C_k}
$$
For large $k$, $C_k\approx \frac{(\pi e)^k}{\sqrt{2\pi k}}$.
It follows that we can achieve rate
$$
R= \log_2 P + \mu -\delta -\log_2 \frac{2(1+\epsilon)}{\pi e c} 
$$
Since $\epsilon$ and $\delta$ are arbitrary, any rate 
$$R < \log_2 P + \mu -\log_2 \frac{2}{\pi e c}$$
is achievable.
\end{proof}

\section{Code design for diagonal fading channels}\label{diagonal}
Let us now consider a fading channel where for every $k$ we have $H_k=\mathrm{diag}[h_1,h_2,\dots,h_k]$.
Assume that each $h_i$ is non-zero with probability 1 and that $\{h_i\}$ forms an  ergodic stationary random process.
In this model, sending a single symbol $x_i$ during the ith time unit leads to the channel equation
\begin{equation}\label{single_channel}
y_i=h_i \cdot x_i + w_i,
\end{equation}
where  $w_i$ is a zero-mean Gaussian complex random variable with variance $1$.

The corresponding set of matrices  $\mathcal{A}_k$ in \eqref{normalized} is a  subset of the set of  diagonal matrices in $M_k(\C)$ having determinant with absolute value 1. 

The assumption that the process $\{h_i\}$ is ergodic and stationary implies that each of the random variables $h_i$ have equal statistics. Therefore we can simply use  $h$ to refer to the statistics of all $h_i$. Assuming now also that $\sum_{i=1}^k \frac{1}{k} \log|h_i|^2$ converges in probability to some constant,  we have  the following.

\begin{cor}\label{singlecapacity}
Suppose that we have a family of lattices $L_k\subset \C^k$, where $\mathrm{rh}_{\mathcal{A}_k}(L_k)\geq 2kc$. Then any rate 
$$
R< \mathbb{E}_h \left[ \log_2 P \abs{h}^2\right] -\log_2 \frac{2}{\pi e c}
$$
is achievable with the family $L_k$ over the fading channel \eqref{single_channel}. 
\end{cor}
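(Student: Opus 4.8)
The plan is to obtain the corollary as a direct specialization of Theorem \ref{main} to the diagonal channel (\ref{single_channel}); the only substantive work is to verify the hypothesis (\ref{WLLN}) and to identify the corresponding constant $\mu$. Note first that the rate formula in Theorem \ref{main} adds $\mu$ to $\log_2 P$, so $\mu$ must be measured in bits and the logarithm in (\ref{WLLN}) is to be read as $\log_2$; I will keep base $2$ throughout.

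First I would compute the relevant determinant. Since $H_k=\mathrm{diag}[h_1,\dots,h_k]$, we have $H_kH_k^{\dagger}=\mathrm{diag}[|h_1|^2,\dots,|h_k|^2]$, so that
$$
\frac{1}{k}\log_2\det(H_kH_k^{\dagger})=\frac{1}{k}\sum_{i=1}^{k}\log_2|h_i|^2 .
$$
Next I would check that (\ref{WLLN}) holds and pin down $\mu$. Because $\{h_i\}$ is stationary and ergodic, so is the real-valued process $\{\log_2|h_i|^2\}$, and by the standing assumption the Ces\`aro averages $\frac{1}{k}\sum_{i=1}^{k}\log_2|h_i|^2$ converge in probability to a constant. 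Invoking Birkhoff's pointwise ergodic theorem (valid once $\log_2|h|^2$ is integrable) identifies that constant as the common expectation, so (\ref{WLLN}) holds with $\mu=\mathbb{E}_h\!\left[\log_2|h|^2\right]$.

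Then I would simply apply Theorem \ref{main}: since $\mathrm{rh}_{\mathcal{A}_k}(L_k)\geq 2kc$ and (\ref{WLLN}) holds with the value of $\mu$ just found, any rate $R<\log_2 P+\mu-\log_2\frac{2}{\pi e c}$ is achievable. It remains only to rewrite the leading terms; by linearity of expectation, $\log_2 P$ being a constant,
$$
\log_2 P+\mathbb{E}_h\!\left[\log_2|h|^2\right]=\mathbb{E}_h\!\left[\log_2 P+\log_2|h|^2\right]=\mathbb{E}_h\!\left[\log_2 P|h|^2\right],
$$
which gives exactly the claimed region $R<\mathbb{E}_h\!\left[\log_2 P|h|^2\right]-\log_2\frac{2}{\pi e c}$.

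The only delicate point is the measure-theoretic step of confirming that the ergodic average converges to a \emph{deterministic} limit equal to the expectation rather than to a random quantity; this is precisely where integrability of $\log_2|h|^2$ and ergodicity (as opposed to mere stationarity) are needed. Everything else is bookkeeping that follows immediately from Theorem \ref{main}.
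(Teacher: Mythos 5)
Your proposal is correct and follows exactly the paper's route: the paper's own proof is the one-line observation that the corollary is Theorem \ref{main} specialized to the diagonal channel with $\mu=\mathbb{E}_h\left[\log_2\abs{h}^2\right]$. The extra details you supply (checking that the ergodic average in \eqref{WLLN} converges to the expectation, keeping the logarithm base consistent, and folding $\log_2 P$ into the expectation) are exactly the bookkeeping the paper leaves implicit.
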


\begin{proof}
This statement follows immediately from Theorem \ref{main}, where $\mu=\mathbb{E}_h\left[\log_2\abs{h}^2\right]$.
\end{proof}


Given two sets $\mathcal{A}_k' \subseteq \mathcal{A}_k$, we have for any lattice $L$ that
$$
\mathrm{rh}_{\mathcal{A}_k'}(L)\geq \mathrm{rh}_{\mathcal{A}_k}(L).
$$
From now on, we will fix $\mathcal{A}_k$ to be the set of all diagonal matrices in $M_k(\C)$ having determinant with absolute value $1$. 
Note that with this choice, if $\mathrm{rh}_{\mathcal{A}_k}(L_k)\geq 2kc$ then Corollary \ref{singlecapacity} holds for any channel of the form \eqref{single_channel}.


Let $(x_1,x_2, \dots, x_k)\in \C^k$. According to \cite[Proposition 8]{LV2017}\footnote{More precisely, this result is slightly stronger than the statement of Proposition 8, but it is clear from its proof.} we 
have 
\begin{equation}\label{reduced}
||(x_1,\dots, x_k)||_{\mathcal{A}_k}^2= k|x_1\cdots x_k|^{2/k}.
\end{equation}

We can now see that 
a lattice with large reduced Hermite invariant must have the property that
the product of  the coordinates of any non-zero element of the lattice is large.

\begin{defi}\label{tulo}
Given $x=(x_1,...,x_k) \in \C^k$, we define  its \emph{ product norm} as
$
n(x)=\prod_{i=1}^k |x_i|.
$
\end{defi}


\begin{defi}
Then the \emph{normalized product distance} of $L_k$ is
\begin{equation}\label{prodred}
\mathrm{Nd}_{\mathrm{p, min}}(L_k)=\inf_{\mathbf{x} \in L_k \setminus \{0\}} \frac{n(x)}{\Vol(L_k)^{\frac{1}{2}}}.
\end{equation}
\end{defi}

Combining \eqref{reduced}, \eqref{rh_characterization} and \eqref{prodred} we have that
\begin{equation}\label{connection}
\mathrm{rh}_{\mathcal{A}_k}(L_k)= k(\mathrm{Nd}_{\mathrm{p, min}}(L_k))^{2/k}.
\end{equation}

This result gives us a more concrete characterization
of the reduced Hermite invariant
and suggests possible candidates for good lattices.

\subsection{Codes from algebraic number fields}
The product distance criterion in the previous section had already been derived in \cite{BERB} by analyzing the pairwise error probability in the special case where the process $\{h_i\}$ is i.i.d Gaussian. 
The authors also pointed out that lattices that are derived
from number fields have large product distance.  
We will now shortly present this classical construction and then study how close to the capacity we can get using number fields. For the relevant background on algebraic number theory we refer the reader to \cite{Nark}.



Let $K/\Q$ be  a totally complex extension of degree $2k$ and $\{\sigma_1,\dots,\sigma_k\}$ be a set  of  $\Q$-embeddings, such that we have chosen one from each complex conjugate pair. Then we can define a
\emph{relative canonical embedding} of $K$ into $\C^k$ by
$$
\psi(x)=(\sigma_1(x),\dots, \sigma_k(x)).
$$

The following lemma is a basic result from algebraic number theory.
\begin{lemma}
The ring of algebraic integers $\OO_K$ has a  $\ZZ$-basis $W=\{w_1,\dots ,w_{2k}\}$ and $\{\psi(w_1),\dots ,\psi(w_{2k})\}$ is a $\ZZ$-basis for the full  lattice $\psi(\OO_K)$ in $\C^k$.
\end{lemma}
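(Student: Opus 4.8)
The plan is to prove this classical structural result in two stages: first establishing that $\OO_K$ is a free $\ZZ$-module of rank $2k$, then verifying that the relative canonical embedding $\psi$ sends a $\ZZ$-basis of $\OO_K$ to an $\R$-linearly independent set, which by Definition of a full lattice is exactly what we need. For the first stage, I would invoke the standard fact that the ring of integers $\OO_K$ of a number field $K$ of degree $n=2k$ over $\Q$ is a free $\ZZ$-module of rank $n$. The usual argument bounds $\OO_K$ between two free modules of the same rank: one chooses any $\Q$-basis of $K$ consisting of algebraic integers (clearing denominators), so that the $\ZZ$-span $M$ of this basis sits inside $\OO_K$; then using the nondegeneracy of the trace form one shows $\OO_K \subseteq M^{\sharp}$, the dual lattice, which is again free of rank $n$. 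Since a submodule of a free $\ZZ$-module of rank $n$ is free of rank at most $n$, and $\OO_K \supseteq M$ forces the rank to be exactly $n=2k$, we obtain a $\ZZ$-basis $W=\{w_1,\dots,w_{2k}\}$.

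For the second stage, I would show that $\{\psi(w_1),\dots,\psi(w_{2k})\}$ is $\R$-linearly independent in $\C^k \cong \R^{2k}$. The key tool is that the full set of $2k$ embeddings $\{\sigma_1,\bar\sigma_1,\dots,\sigma_k,\bar\sigma_k\}$ gives, via $x\mapsto(\sigma_1(x),\dots,\sigma_k(x),\bar\sigma_1(x),\dots,\bar\sigma_k(x))$, an embedding whose associated matrix on any $\Q$-basis is (up to a nonzero constant) built from the discriminant of $K$, hence nonsingular since $\disc(K)\neq 0$. Because $K$ is totally complex, $\psi$ retains exactly one embedding from each conjugate pair, and the passage from complex coordinate $\sigma_j(x)$ to the real pair $(\Re\sigma_j(x),\Im\sigma_j(x))$ is an $\R$-linear change of coordinates on $\C^k$ that is invertible (it differs from recording both $\sigma_j$ and $\bar\sigma_j$ only by multiplication by the invertible matrix $\left(\begin{smallmatrix}1 & i\\ 1 & -i\end{smallmatrix}\right)$ in each block). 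Thus the real $2k\times 2k$ matrix whose rows are the coordinate vectors of $\psi(w_1),\dots,\psi(w_{2k})$ has nonzero determinant, giving $\R$-linear independence.

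Concretely, I would argue that if $\sum_{j} c_j \psi(w_j)=0$ with $c_j\in\R$, then applying each $\sigma_i$ and its conjugate $\bar\sigma_i$ yields a homogeneous linear system whose coefficient matrix is the nonsingular embedding matrix above; hence all $c_j=0$. This shows the $\psi(w_j)$ span a full lattice of the required form $\ZZ\psi(w_1)\oplus\cdots\oplus\ZZ\psi(w_{2k})$ in $\C^k$, matching the definition of a full lattice given earlier in the paper.

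The main obstacle, and the only step requiring genuine care, is the nonvanishing of the relevant determinant after restricting to one embedding per conjugate pair: one must confirm that discarding $\bar\sigma_j$ in favor of recording the real and imaginary parts of $\sigma_j$ preserves invertibility. This is where the totally complex hypothesis is essential, and where I would be careful to track the constant factors (powers of $2i$) introduced by the real-imaginary decomposition to ensure the determinant is a nonzero multiple of $\sqrt{\abs{\disc(K)}}$ rather than accidentally zero.
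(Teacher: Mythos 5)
Your proof is correct. The paper itself offers no proof of this lemma, presenting it as ``a basic result from algebraic number theory,'' and your argument is precisely the standard one: sandwiching $\OO_K$ between a free module $M$ and its trace-form dual to get a $\ZZ$-basis of rank $2k$, and then using the nonvanishing of the discriminant $d_K$ (the square of the determinant of the full $2k\times 2k$ embedding matrix) to get $\R$-linear independence of the $\psi(w_j)$. You correctly identify the one point needing care --- that the reality of the coefficients $c_j$ is what lets you adjoin the conjugate equations $\sum_j c_j \bar\sigma_i(w_j)=0$ to the system, so that only one embedding per conjugate pair needs to be recorded while still invoking the full nonsingular matrix --- and the block change of coordinates by $\left(\begin{smallmatrix}1 & i\\ 1 & -i\end{smallmatrix}\right)$ handles the passage to real and imaginary parts exactly as claimed.
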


For our purposes the key property of the lattices  $\psi(\OO_K)$ is that for any non-zero  element $\psi(x)=(\sigma_1(x),\dots, \sigma_k(x))\in \psi(\OO_K)$, we have that
$$
\Big|\prod_{i=1}^k \sigma_i(x)\Big|^2=nr_{K/\Q}(x)\in \ZZ,
$$
where $nr_{K/\Q}(x)$ is the algebraic norm of the element $x$.
In particular it follows that $|\prod_{i=1}^k \sigma_i(x)|\geq 1$. 

We now know that $\psi(\OO_K)$ is a $2k$-dimensional lattice in $\C^k$ with the property that
$\mathrm{Nd}_{\mathrm{p, min}}(\psi(\OO_K))\neq 0$ and therefore $\mathrm{rh}_{\mathcal{A}_k}(\psi(\OO_K))\neq 0$. This is true for any totally complex number field. Let us now show how the value of    $\mathrm{rh}_{\mathcal{A}_k}(\psi(\OO_K))$  is related to
an algebraic invariant of the field $K$.

 We will denote the \emph{discriminant} of a number field $K$ with $d_K$. For every number field it is a non-zero integer.

The following Lemma states some
well-known results from  algebraic number theory and a translation of these results into our coding-theoretic language.
\begin{lemma}\label{complex}
Let  $K/\Q$ be a totally complex extension of degree $2k$ and  let $\psi$ be the relative canonical embedding. Then
$$
\mathrm{Vol}(\psi(\mathcal{O}_K))=2^{-k}\sqrt{|d_K|} 
$$
$$
\mathrm{Nd}_{\mathrm{p, min}}(\psi(\mathcal{O}_K))=\frac{2^{\frac{k}{2}}}{|d_K|^{\frac{1}{4}}} \,\, \mathrm{and}\,\,   \mathrm{rh}_{\mathcal{A}_k}(\psi(\mathcal{O}_K))=\frac{2k}{|d_K|^{1/2k}}.
$$
\end{lemma}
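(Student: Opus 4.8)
The plan is to establish the three formulas in sequence, since each builds on the previous one. The key algebraic inputs are standard facts from algebraic number theory about the discriminant and the algebraic norm, combined with the characterizations \eqref{connection} and \eqref{reduced} already derived.

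First I would compute $\mathrm{Vol}(\psi(\OO_K))$. The covolume of a lattice is the absolute value of the determinant of the matrix whose rows are the basis vectors $\psi(w_i)$, viewed as real vectors in $\R^{2k}$ under the identification $\C^k \cong \R^{2k}$ (splitting each complex coordinate into real and imaginary parts). The relative embedding $\psi$ uses only one embedding from each conjugate pair, whereas the classical discriminant relation involves the \emph{full} canonical embedding $(\sigma_1,\bar\sigma_1,\dots,\sigma_k,\bar\sigma_k)$ into $\C^{2k}$, whose Gram determinant is $|d_K|$. The standard fact is that the full canonical embedding of $\OO_K$ has covolume $2^{-k}\sqrt{|d_K|}$ in $\R^{2k}$; here passing from the complex pair $(z,\bar z)$ to the real pair $(\mathrm{Re}\,z, \mathrm{Im}\,z)$ contributes the factor $2^{-k}$ via the change-of-basis matrix $\frac{1}{2}\left(\begin{smallmatrix} 1 & 1 \\ -i & i\end{smallmatrix}\right)$ applied $k$ times. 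Since $\psi$ records exactly the real and imaginary parts of one representative from each pair, $\psi(\OO_K)$ is precisely this lattice, giving the first formula directly.

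Next I would compute $\mathrm{Nd}_{\mathrm{p,min}}(\psi(\OO_K))$. By Definition \ref{tulo} and the key property already noted, for any nonzero $x\in\OO_K$ the product norm satisfies $n(\psi(x)) = \bigl|\prod_{i=1}^k \sigma_i(x)\bigr|$, and since $\bigl|\prod_{i=1}^k\sigma_i(x)\bigr|^2 = |nr_{K/\Q}(x)| \in \ZZ_{\geq 1}$, the minimum value of $n(\psi(x))$ over nonzero lattice points is exactly $1$, attained for instance at $x=1$ (where the norm equals $1$). Substituting this minimum and the covolume into \eqref{prodred} yields
$$
\mathrm{Nd}_{\mathrm{p,min}}(\psi(\OO_K)) = \frac{1}{\bigl(2^{-k}\sqrt{|d_K|}\bigr)^{1/2}} = \frac{2^{k/2}}{|d_K|^{1/4}}.
$$

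Finally the third formula follows mechanically from the identity \eqref{connection}, namely $\mathrm{rh}_{\mathcal{A}_k}(L_k) = k\bigl(\mathrm{Nd}_{\mathrm{p,min}}(L_k)\bigr)^{2/k}$: substituting the just-computed product distance gives $k\bigl(2^{k/2}|d_K|^{-1/4}\bigr)^{2/k} = k\cdot 2 \cdot |d_K|^{-1/(2k)} = 2k/|d_K|^{1/2k}$. I expect the main obstacle to be the volume computation, specifically getting the factor $2^{-k}$ correct and justifying cleanly that the relative embedding $\psi$ (one representative per conjugate pair, split into real and imaginary parts) produces the same lattice as the classical real canonical embedding rather than merely a lattice of the same covolume. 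The two remaining formulas are then essentially bookkeeping, since the nontrivial arithmetic content — that the minimal algebraic norm is $1$ — comes for free from the integrality of $nr_{K/\Q}$ on $\OO_K$.
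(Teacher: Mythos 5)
Your proof is correct and is the standard derivation: the paper itself states Lemma \ref{complex} without proof, presenting it as a collection of well-known facts, and your argument (the Minkowski covolume formula $2^{-r_2}\sqrt{|d_K|}$ with $r_2=k$, the observation that $n(\psi(x))=\sqrt{|nr_{K/\Q}(x)|}\geq 1$ with equality at $x=1$, and then mechanical substitution into \eqref{prodred} and \eqref{connection}) is exactly the intended justification. The factor $2^{-k}$ from the change of basis $(z,\bar z)\mapsto(\mathrm{Re}\,z,\mathrm{Im}\,z)$ is handled correctly.
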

We have now translated the question of finding 
algebraic lattices with the largest reduced Hermite invariants into the task of finding the totally complex number fields with the smallest discriminant. Luckily this is 
a well-known mathematical problem with a tradition of almost a hundred years.

In \cite{Martinet}, J. Martinet proved the existence of an infinite tower of totally complex number fields $\{K_k\}$ of degree $2k$, where $2k=5\cdot2^t$, such that
\begin{equation} \label{G}
 \abs{d_{K_k}}^{\frac{1}{k}}=G^2,
\end{equation}
for $G \approx 92.368$.
For such fields $K_k$ we have that
$$
\mathrm{Nd}_{\mathrm{p, min}}(\psi(\mathcal{O}_{K_k}))=\left(\frac{2}{G}\right)^{\frac{k}{2}} \,\, \mathrm{and}\,\,\,   \mathrm{rh}_{\mathcal{A}_k}(\psi(\mathcal{O}_{K_k}))=\frac{2k}{G}.
$$

Specializing Corollary \ref{singlecapacity} to the family of lattices $L_k=\psi(\mathcal{O}_{K_k})$ 
derived from Martinet's tower, which satisfy the hypothesis with $c=1/G$, we then have the following result:
\begin{prop}\label{Martinetcapacity}
Finite codes drawn from the lattices $L_k$ achieve any rate satisfying
$$
 R < \mathbb{E}_h \left[ \log_2 P \abs{h}^2\right] -\log_2 \frac{2G}{\pi e}.
$$
\end{prop}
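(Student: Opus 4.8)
The plan is to specialize Corollary~\ref{singlecapacity} to the concrete family $L_k = \psi(\mathcal{O}_{K_k})$ coming from Martinet's tower, and the whole argument reduces to identifying the correct value of the constant $c$. Recall that Corollary~\ref{singlecapacity} guarantees achievability of any rate $R < \mathbb{E}_h[\log_2 P|h|^2] - \log_2 \frac{2}{\pi e c}$ for a family of lattices satisfying $\mathrm{rh}_{\mathcal{A}_k}(L_k) \geq 2kc$. So the only thing I need to do is verify that the Martinet lattices satisfy this hypothesis with an explicit $c$, and then substitute.

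First I would invoke Lemma~\ref{complex}, which gives the exact formula $\mathrm{rh}_{\mathcal{A}_k}(\psi(\mathcal{O}_K)) = \frac{2k}{|d_K|^{1/2k}}$ for any totally complex degree-$2k$ field $K$. Next I would feed in Martinet's discriminant bound \eqref{G}, namely $|d_{K_k}|^{1/k} = G^2$ with $G \approx 92.368$; this immediately gives $|d_{K_k}|^{1/2k} = G$, and hence $\mathrm{rh}_{\mathcal{A}_k}(\psi(\mathcal{O}_{K_k})) = \frac{2k}{G}$, exactly as recorded in the excerpt just before the proposition. Reading this against the hypothesis $\mathrm{rh}_{\mathcal{A}_k}(L_k) \geq 2kc$, I see the family meets the condition with equality for the choice $c = 1/G$.

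With $c = 1/G$ in hand, the conclusion is a direct substitution: the achievable-rate threshold $\log_2 \frac{2}{\pi e c}$ becomes $\log_2 \frac{2G}{\pi e}$, and Corollary~\ref{singlecapacity} then yields that any rate $R < \mathbb{E}_h[\log_2 P|h|^2] - \log_2 \frac{2G}{\pi e}$ is achievable. I should also note in passing that the Martinet tower is defined only along the subsequence $2k = 5 \cdot 2^t$, but since achievability is an asymptotic statement over $k \to \infty$ and this subsequence is infinite, restricting to it causes no difficulty.

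Because every quantitative step is supplied verbatim by the preceding lemmas, I do not anticipate a genuine obstacle here; the proof is essentially a one-line invocation. The only point requiring even mild care is confirming that the fading process in question satisfies the convergence-in-probability hypothesis underlying Corollary~\ref{singlecapacity}, i.e.\ that $\sum_{i=1}^k \frac{1}{k}\log|h_i|^2$ converges so that $\mu = \mathbb{E}_h[\log_2|h|^2]$ is well defined; this was already assumed in the setup of Section~\ref{diagonal}, so it can be taken for granted. The substantive content lives entirely in Lemma~\ref{complex} and in Martinet's theorem, both of which I am entitled to assume.
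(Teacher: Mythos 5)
Your proposal is correct and follows exactly the paper's own (one-line) argument: the paper likewise obtains Proposition~\ref{Martinetcapacity} by noting that Lemma~\ref{complex} together with Martinet's bound $|d_{K_k}|^{1/k}=G^2$ gives $\mathrm{rh}_{\mathcal{A}_k}(\psi(\mathcal{O}_{K_k}))=2k/G$, so the hypothesis of Corollary~\ref{singlecapacity} holds with $c=1/G$, and the rate expression follows by substitution. Your added remarks about the subsequence $2k=5\cdot 2^t$ and the convergence hypothesis \eqref{WLLN} are sensible points of care that the paper leaves implicit.
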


\begin{rem}\label{uniformity}
We note that given a stationary and ergodic fading process $\{h_i\}$ the capacity of the corresponding channel is
$$
C= \mathbb{E}_h \left[ \log_2(1+P|h|^2)\right].
$$ 
It is easy to prove that the  rate achieved in Proposition \ref{Martinetcapacity} is a constant gap from the capacity.
This gap is also universal in the following sense. Let us consider all ergodic  stationary channels  with the same first order statistics for $h$. Then  the \emph{same} sequence of finite codes achieve the same gap to capacity in all the channels simultaneously. 

\end{rem}

\begin{rem}
We note that the number field towers we used are not the best known possible.  It was shown in \cite {Hajir_Maire} that one can construct a family of totally complex fields such that $G<82.2$, but this choice would add some notational complications.
\end{rem}

\begin{rem}
The families of number fields  on which our constructions are based 
were first brought to coding theory in \cite{LT}, where the authors pointed out that the corresponding lattices  have linearly growing Hermite constant.  This directly implies that they are only a constant gap from the AWGN  capacity.
C. Xing in \cite{Xing} remarked that   these families of number fields provide the best known normalized product distance. 
Overall number field lattices in fading channels  have been well-studied in the literature. However, to the best of our knowledge we were the first to prove that they actually do achieve a constant gap to capacity over fading channels. 
\end{rem}

\subsection{Codes from ideals}

As seen in the previous section, lattice codes arising from the rings of algebraic integers of number fields with constant root discriminants will achieve a constant gap to capacity over fading channels. 
  However, known lower bounds for discriminants  \cite{Odlyzko} 
  imply that no matter which number fields we use, 
the gap cannot be reduced beyond a certain threshold (at least when using our current approach to bound the error probability). 
It is then natural to ask whether
other lattice constructions could lead us closer to capacity.
The most obvious generalization is to consider additive subgroups of $\OO_K$ and in particular  ideals of $\OO_K$, which 
will have non-zero reduced Hermite invariant.  
Most works concerning 
 lattice codes from number fields focused on either the ring $\OO_K$ or a principal ideal  $a\OO_K$;
 a more general setting was considered in \cite{FOV} and \cite{Oggier}, which addressed 
   the question of increasing the normalized product distance using non-principal ideals $I$.

The problem with this approach is that while  finding the reduced Hermite invariant of  lattices $\psi(\OO_K)$ or  $\psi(a\OO_K)$ is an easy task, the same is not true for $\psi(I)$ when $I$ is non-principal. We will now show how this problem can be reduced to another well-known problem in algebraic number theory and how it can be used to study the  performance limits of the lattices $\psi(I)$. Here we will follow the extended arXiv version of \cite{ISIT2015_SISO}.

We note that while the  concept of reduced Hermite invariant is more general and 
its information-theoretical meaning is clearer,  number theoretic proofs are easier when using the equivalent product distance notation. Therefore we will mostly 
focus on the product distance in this section. 

Let $K$ be  a totally complex field of degree $2k$. We will use the notation $\mathrm{N}(I)=[\OO_K:I]$ for the norm of an ideal $I$.
 From classical algebraic number theory we  have that $N(a\OO_K)=|nr_{K/\Q}(a)|$ and 
$N(AB)=N(A)N(B)$.

\begin{lemma}\label{idealvolume}
Suppose that $K$ is a totally complex field of degree $2k$ and that  $I$ is an integral ideal in $K$.
 Then $\psi(I)$ is
a $2k$-dimensional lattice in $\C^k$ and 
$$
\Vol{(\psi(I))}=[\OO_K:I] 2^{-k}\sqrt{|d_K|}.
$$
\end{lemma}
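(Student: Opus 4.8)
The plan is to establish the volume formula for $\psi(I)$ by comparing it to the already-known volume of $\psi(\OO_K)$ from Lemma \ref{complex}, exploiting the fact that an integral ideal $I$ is a finite-index subgroup of $\OO_K$. First I would recall the general principle that if $\Lambda' \subseteq \Lambda$ are full lattices in a real vector space with $[\Lambda:\Lambda']$ finite, then $\Vol(\Lambda') = [\Lambda:\Lambda'] \cdot \Vol(\Lambda)$. The embedding $\psi$ is $\R$-linear and injective, so it carries the index of $I$ inside $\OO_K$ to the index of $\psi(I)$ inside $\psi(\OO_K)$; that is, $[\psi(\OO_K):\psi(I)] = [\OO_K:I] = \mathrm{N}(I)$.

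Concretely, I would argue as follows. Since $I$ has finite index in $\OO_K$ as an abelian group, it is itself a free $\ZZ$-module of rank $2k$, and therefore $\psi(I)$ is a full lattice in $\C^k$ (this gives the first claim of the lemma). To compute its volume, pick a $\ZZ$-basis of $\OO_K$ and a $\ZZ$-basis of $I$; the change-of-basis matrix $M$ between them has integer entries and satisfies $|\det M| = [\OO_K:I] = \mathrm{N}(I)$. Under $\psi$, the generator matrix of $\psi(I)$ is obtained from that of $\psi(\OO_K)$ by multiplying on the appropriate side by $M$, so the fundamental volumes scale by $|\det M|$. This yields
\begin{equation*}
\Vol(\psi(I)) = \mathrm{N}(I)\,\Vol(\psi(\OO_K)) = [\OO_K:I]\,2^{-k}\sqrt{|d_K|},
\end{equation*}
where the last equality substitutes the value of $\Vol(\psi(\OO_K))$ from Lemma \ref{complex}.

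I do not expect any serious obstacle here, since this is essentially a bookkeeping argument about index and determinant; the only point demanding a little care is making sure the index of subgroups is genuinely preserved under the injective linear map $\psi$ (rather than merely bounded), which follows because $\psi$ is a $\ZZ$-module isomorphism onto its image. An alternative, more self-contained route would be to invoke the standard formula $\Vol(\psi(I)) = 2^{-k}\sqrt{|\mathrm{disc}(I)|}$ together with the ideal-discriminant relation $\mathrm{disc}(I) = \mathrm{N}(I)^2 d_K$, but the index-scaling argument is cleaner because it builds directly on Lemma \ref{complex} and avoids reproving a discriminant identity. I would present the index argument as the main line and mention the discriminant version only if a reader prefers it.
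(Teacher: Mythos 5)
Your argument is correct. The paper itself gives no proof of this lemma, simply citing it as a well-known fact, and your index-scaling derivation --- $[\psi(\OO_K):\psi(I)]=[\OO_K:I]$ because $\psi$ is an injective $\ZZ$-linear map, combined with $\Vol(\psi(\OO_K))=2^{-k}\sqrt{|d_K|}$ from Lemma \ref{complex} --- is exactly the standard route one would expect here; the only implicit convention worth noting is that $I$ is a \emph{nonzero} integral ideal, so that $[\OO_K:I]$ is finite and $I$ is free of rank $2k$ over $\ZZ$.
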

This well-known result 
allows us to compute the volume of an ideal, but 
computing its normalized product distance 
is a more complicated issue. In \cite[Theorem 3.1]{FOV} the authors stated the analogue of the following  result for the totally real case. It is simply a restatement of the definitions.
\begin{prop} 
Let us suppose that $K$ is a totally complex field of degree $2k$ and that $I$ is an integral ideal of $K$. We then have that
\begin{equation}
\mathrm{Nd}_{\mathrm{p, min}}(\psi(I))=\frac{2^{\frac{k}{2}}}{|d_K|^{\frac{1}{4}}}\mathrm{min}(I),
\end{equation}
where $\mathrm{min}(I):=\underset{x \in I \setminus \{0\}}{\mathrm{min}}\sqrt{\frac{|\mathrm{nr}_{K/\Q}(x)|}{\mathrm{N}(I)}}.$
\end{prop}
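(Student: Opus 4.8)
The plan is to establish the identity by a direct computation from the definitions, using the three ingredients already assembled above: the definition \eqref{prodred} of the normalized product distance, the product-norm identity for number-field elements, and the volume formula of Lemma \ref{idealvolume}.

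First I would fix a nonzero $x \in I$ and evaluate the product norm of the lattice point $\psi(x)=(\sigma_1(x),\dots,\sigma_k(x))$. Since $K$ is totally complex of degree $2k$ and the embeddings $\{\sigma_1,\dots,\sigma_k\}$ contain exactly one representative from each conjugate pair, the algebraic norm factors as $\mathrm{nr}_{K/\Q}(x)=\prod_{i=1}^k \sigma_i(x)\overline{\sigma_i(x)}=\prod_{i=1}^k |\sigma_i(x)|^2$, so that
\[
n(\psi(x))=\prod_{i=1}^k |\sigma_i(x)|=\sqrt{|\mathrm{nr}_{K/\Q}(x)|}.
\]
This is the step that converts the geometric product norm into the arithmetic field norm. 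Inserting $\Vol(\psi(I))=\mathrm{N}(I)\,2^{-k}\sqrt{|d_K|}$ from Lemma \ref{idealvolume}, hence $\Vol(\psi(I))^{1/2}=\mathrm{N}(I)^{1/2}2^{-k/2}|d_K|^{1/4}$, into \eqref{prodred} and factoring the constant $2^{k/2}/|d_K|^{1/4}$ out of the infimum yields
\[
\mathrm{Nd}_{\mathrm{p, min}}(\psi(I))=\frac{2^{k/2}}{|d_K|^{1/4}}\,\inf_{x\in I\setminus\{0\}}\sqrt{\frac{|\mathrm{nr}_{K/\Q}(x)|}{\mathrm{N}(I)}},
\]
which is the claimed expression once the remaining infimum is identified with $\mathrm{min}(I)$.

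The only point deserving a word of justification --- and the reason the infimum is legitimately written as a minimum --- is attainment. For every $x\in I\setminus\{0\}$ one has $x\OO_K\subseteq I$, so $\mathrm{N}(I)$ divides $\mathrm{N}(x\OO_K)=|\mathrm{nr}_{K/\Q}(x)|$; thus each ratio $|\mathrm{nr}_{K/\Q}(x)|/\mathrm{N}(I)$ is a positive integer, the set of these integers is bounded below by $1$, and its infimum is therefore achieved. I do not anticipate any genuine obstacle here: as the authors note, the result is essentially a restatement of the definitions, the only substantive inputs being the norm identity and Lemma \ref{idealvolume}.
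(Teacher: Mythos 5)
Your proof is correct and follows exactly the route the paper takes: the paper's own proof is a one-line remark that the result follows from Lemma \ref{idealvolume}, the definition \eqref{prodred}, and the identity $\sqrt{|\mathrm{nr}_{K/\Q}(x)|}=n(\psi(x))$, which are precisely the three ingredients you combine. Your additional observation that the infimum is attained (because each ratio $|\mathrm{nr}_{K/\Q}(x)|/\mathrm{N}(I)$ is a positive integer) is a correct refinement the paper leaves implicit.
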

\begin{proof}
This  result follows from Lemma \ref{idealvolume}, the definition of the normalized product distance and from noticing that
 $\sqrt{|\mathrm{nr}_{K/\Q}(x)|}=|n(\psi(x))|.$
\end{proof}

Due to the basic ideal theory of algebraic numbers $\mathrm{min}(I)$ is always larger or equal to $1$. If $I$ is not a principal ideal then we have that
$\mathrm{min}(I)\geq \sqrt{2}$. Comparing this to Lemma \ref{complex} we find that, given a non principal ideal domain $\OO_K$, we  should use an ideal $I$, which is not principal, to maximize the product distance. Now there are two obvious questions. Given a  non principal ideal domain $\OO_K$, which ideal $I$ should we use and how much can we gain? 
Before answering these questions we need the following.

\begin{lemma}\cite{FOV}
For any non-zero element $x \in K$, 
$$
\mathrm{Nd}_{\mathrm{p, min}}(\psi(xI))=\mathrm{Nd}_{\mathrm{p, min}}(\psi(I)).
$$
\end{lemma}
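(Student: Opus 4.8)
The plan is to show that multiplication by a nonzero field element $x$ induces a norm-scaling bijection between $\psi(I)$ and $\psi(xI)$ that rescales the volume and the product norm by exactly matching factors, so that their ratio---the normalized product distance---is preserved. The key observation is that for the relative canonical embedding, multiplication in $K$ is intertwined with coordinatewise multiplication by the vector $\psi(x)=(\sigma_1(x),\dots,\sigma_k(x))$ in $\C^k$. That is, for any $y \in K$ we have $\psi(xy) = \psi(x)\cdot\psi(y)$, where $\cdot$ denotes the componentwise (Hadamard) product. This is immediate from the multiplicativity of each embedding $\sigma_i$.

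First I would compute how the product norm behaves under this operation. For any nonzero $y\in I$, the definition of $n$ gives
$$
n(\psi(xy)) = \prod_{i=1}^k |\sigma_i(x)\sigma_i(y)| = \Big(\prod_{i=1}^k |\sigma_i(x)|\Big)\Big(\prod_{i=1}^k |\sigma_i(y)|\Big) = |nr_{K/\Q}(x)|^{1/2}\, n(\psi(y)),
$$
using that $|\prod_i \sigma_i(x)|^2 = |nr_{K/\Q}(x)|$. Thus every product norm in $\psi(xI)$ is the corresponding product norm in $\psi(I)$ scaled by the constant factor $|nr_{K/\Q}(x)|^{1/2}$, and since $y\mapsto xy$ is a bijection of $I$ onto $xI$, the infimum of $n$ over nonzero elements scales by the same factor.

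Next I would compute the effect on the volume. The map $x\mapsto xI$ sends the integral (or fractional) ideal $I$ to $xI$, and by the norm relation $N(xI)=|nr_{K/\Q}(x)|\,N(I)$ together with Lemma \ref{idealvolume}, the volume satisfies $\Vol(\psi(xI)) = |nr_{K/\Q}(x)|\,\Vol(\psi(I))$. (Strictly, Lemma \ref{idealvolume} is stated for integral ideals, but the volume formula $\Vol(\psi(J)) = N(J)\,2^{-k}\sqrt{|d_K|}$ extends to fractional ideals via the multiplicativity of the ideal norm, which is all that is needed here.) Forming the ratio in the definition \eqref{prodred}, the factor $|nr_{K/\Q}(x)|^{1/2}$ in the numerator exactly cancels the factor $\Vol(\psi(xI))^{1/2} = |nr_{K/\Q}(x)|^{1/2}\Vol(\psi(I))^{1/2}$ in the denominator, yielding $\mathrm{Nd}_{\mathrm{p,min}}(\psi(xI)) = \mathrm{Nd}_{\mathrm{p,min}}(\psi(I))$.

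The computation itself is routine once the intertwining identity $\psi(xy)=\psi(x)\cdot\psi(y)$ is in place; there is no real obstacle. The one point requiring mild care is the bookkeeping of exponents: the product norm carries a square root (so scales by $|nr_{K/\Q}(x)|^{1/2}$) while the volume carries a full power of the ideal norm (so scales by $|nr_{K/\Q}(x)|$), and one must confirm these match under the $\Vol^{1/2}$ normalization. A secondary point is ensuring the volume formula applies to $xI$ when $x$ is not an algebraic integer, i.e.\ when $xI$ is merely a fractional ideal; this is handled by the multiplicativity of the norm, so the statement holds for all nonzero $x\in K$ as claimed.
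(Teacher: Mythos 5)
Your proof is correct. The paper itself gives no proof of this lemma (it is cited from \cite{FOV}), and your argument is the standard one: the intertwining identity $\psi(xy)=\psi(x)\cdot\psi(y)$ shows the product norm scales by $|nr_{K/\Q}(x)|^{1/2}$ while the $2k$-dimensional volume scales by $|nr_{K/\Q}(x)|$ (which you could also obtain directly as the real determinant of the diagonal map $\mathrm{diag}(\sigma_1(x),\dots,\sigma_k(x))$, avoiding any appeal to fractional-ideal norms), and these factors cancel in the ratio $n(\cdot)/\Vol(\cdot)^{1/2}$ exactly as you verify.
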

This result proves that every ideal in a given ideal class has the same normalized product distance. It follows that given a ring of integers $\OO_K$, it is enough to 
consider
one ideal from every ideal class.
Given an ideal $I$ we will denote with $[I]$ the ideal class to which
$I$ belongs.

Let us denote with $N_{\text{min}}(K)$  the norm of an ideal $A$ in $K$ with the property that every ideal class of $K$ contains an integral ideal with norm $N(A)$ or smaller.  The question of finding the 
size of  
 $N_{\text{min}}(K)$ is a classical problem in algebraic number theory. We refer the reader to \cite{Zim} for further reading. The following result is from the extended arXiv version of \cite{ISIT2015_SISO}.

\begin{prop}\label{idealform}
Let us suppose that $K$ is a totally complex number field of degree $2k$  and that $I$ is an ideal that maximizes the normalized product distance over  all ideals in $K$. We then have that
$$
\mathrm{Nd}_{\mathrm{p, min}}(\psi(I))= \frac{2^{k/2}\sqrt{N_{\text{min}}(K)}}{|d_K|^{\frac{1}{4}}} \, \,\mathrm{and} \,\,\, \mathrm{rh}_{A_k}(\psi(I))=\frac{2k(N_{\text{min}}(K))^{1/k}}{|d_K|^{\frac{1}{2k}}}.
$$
\end{prop}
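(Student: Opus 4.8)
The plan is to reduce the maximization of the normalized product distance over all ideals to the purely arithmetic quantity $N_{\text{min}}(K)$, and then to pass to the reduced Hermite invariant via the identity \eqref{connection}. By the preceding proposition, for any integral ideal $I$ we have $\mathrm{Nd}_{\mathrm{p,min}}(\psi(I))=\frac{2^{k/2}}{|d_K|^{1/4}}\,\mathrm{min}(I)$, so maximizing the product distance amounts to maximizing $\mathrm{min}(I)$ as $I$ ranges over all ideals of $K$.

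The key step is to reinterpret $\mathrm{min}(I)$ in terms of ideal norms. For a nonzero $x\in I$, the containment $x\OO_K\subseteq I$ shows that $J:=xI^{-1}$ is an \emph{integral} ideal, lying in the inverse class $[I]^{-1}$, and multiplicativity of the norm gives $\frac{|\mathrm{nr}_{K/\Q}(x)|}{N(I)}=\frac{N(x\OO_K)}{N(I)}=N(J)$. Conversely, I would check that every integral ideal $J$ in the class $[I]^{-1}$ arises this way: since $IJ$ lies in the principal class it equals $x\OO_K$ for some $x$, and $x\in IJ\subseteq I$. Hence
$$
\mathrm{min}(I)^2=\min_{x\in I\setminus\{0\}}\frac{|\mathrm{nr}_{K/\Q}(x)|}{N(I)}=\min\{\,N(J):J\text{ integral},\ [J]=[I]^{-1}\,\}.
$$

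Writing $m(\mathfrak{c})$ for the least norm of an integral ideal in an ideal class $\mathfrak{c}$, the previous lemma shows that $\mathrm{Nd}_{\mathrm{p,min}}(\psi(I))$ depends only on $[I]$, so maximizing over ideals is the same as maximizing over classes. As $I$ ranges over all ideals the inverse class $[I]^{-1}$ ranges over the whole class group, whence $\max_I \mathrm{min}(I)^2=\max_{\mathfrak{c}} m(\mathfrak{c})$. By the definition of $N_{\text{min}}(K)$ this maximum is precisely $N_{\text{min}}(K)$: every class contains an integral ideal of norm at most $N_{\text{min}}(K)$, and the class realizing $N_{\text{min}}(K)$ shows the bound is attained. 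Substituting $\mathrm{min}(I)=\sqrt{N_{\text{min}}(K)}$ into the product-distance formula gives the first claimed equality, and feeding this into \eqref{connection} yields
$$
\mathrm{rh}_{\mathcal{A}_k}(\psi(I))=k\left(\frac{2^{k/2}\sqrt{N_{\text{min}}(K)}}{|d_K|^{1/4}}\right)^{2/k}=\frac{2k\,(N_{\text{min}}(K))^{1/k}}{|d_K|^{1/2k}},
$$
the second claimed equality.

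The main obstacle is the identification in the key step — in particular verifying the converse direction, that every integral ideal in the inverse class $[I]^{-1}$ has the form $xI^{-1}$ for some $x\in I$, so that the infimum over lattice points genuinely sweeps out \emph{all} norms attainable in that class rather than some proper subset. Once this correspondence is in place, the remaining steps are definitional bookkeeping together with the single application of \eqref{connection}.
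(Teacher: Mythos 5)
Your proof is correct and follows essentially the same route as the paper's: both rest on the correspondence $x\OO_K=I\cdot(xI^{-1})$ identifying the values $|\mathrm{nr}_{K/\Q}(x)|/N(I)$ for $x\in I\setminus\{0\}$ with the norms of integral ideals in the class $[I]^{-1}$, followed by maximizing over classes to reach $N_{\text{min}}(K)$ and an application of \eqref{connection}. The ``converse direction'' you flag as the main obstacle is exactly the paper's upper-bound step (taking an integral ideal $A$ of smallest norm in $[L]^{-1}$ and writing $AL=y\OO_K$ with $y\in L$), so there is no gap.
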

\begin{proof}
Let $L$ be any 
ideal in $K$, and 
 suppose that $A$ is an integral ideal in the class  $[L]^{-1}$ with the smallest norm. We then have that there exists an element $y \in \OO_K$ such that $y\OO_K=AL$. As $n(\psi(y))=\sqrt{N(L)N(A)}$ and $N(A)\leq N_{\text{min}}(K)$ we have that
$\mathrm{d_{p,min}}(L) \leq \sqrt{N(L)N_{\text{min}}(K)}$ and $\mathrm{Nd_{p, min}}(\psi(L))\leq  \frac{\sqrt{N_{\text{min}}(K)}2^{k/2}}{|d_K|^{1/4}}$. 

Assume that $S$ is an ideal such that $N(S)=N_{\text{min}}(K)$ and choose $I$ as an element from the class
$[S]^{-1}$. 
For any non-zero element $x \in I$, we then have that $x\OO_K=IC$, for some ideal $C$ that belongs to the class
$[S]$. Therefore we have that $n(\psi(x))\geq \sqrt{N(I)N(C)}$.
\end{proof}
This result translates the question of finding the product distance of an ideal into a well-known problem in algebraic number theory. It 
also 
suggests which ideal class we should use in order to maximize the product distance.

Denote with $\mathcal{K}_{2k}$ the set of totally complex number fields  of degree $2k$. 
Then the 
optimal normalized product distance over all complex fields of degree $2k$ and all ideals $I$ is
\begin{equation}\label{maximum}
\underset{K\in \mathcal{K}_{2k }}{\mathrm{max}} \frac{2^{k/2}\sqrt{N_{\text{min}}(K)}}{|d_K|^{\frac{1}{4}}}.
\end{equation}

As far as we know it is an open  question whether the maximum 
in \eqref{maximum} 
is  always achieved when $K$ is principal ideal domain.
Some preliminary data can be found 
in \cite{FOV}.  We point out that 
Proposition \ref{idealform} 
makes this problem computationally much
more accessible.

\section{Reduced Hermite invariants as homogeneous forms }\label{geometry}
Let us now see   how different linear channels define different  sets $\mathcal{A}_k$ and how the corresponding reduced norms can be seen as different \emph{homogeneous forms}. For simplicity we will study the case when we transmit four  information symbols $(x_1, x_2, x_3, x_4)$.

\smallskip

In the AWGN channel the receiver sees 
$$
(x_1, x_2, x_3, x_4)+ (w_1, w_2, w_3, w_4),
$$
where $w_i$ are Gaussian random variables. Here the set 
$\mathcal{A}_4^{(1)}$
simply consists of a single element, the $4\times4$ identity matrix. Therefore we obviously have 
$$
||(x_1,x_2,x_3,x_4)||_{\mathcal{A}_4^{(1)}}^2=|x_1|^2+|x_2|^2+|x_3|^2+ |x_4|^2.
$$

Let us then consider a channel where the fading stays stable for $2$ time units and then changes. Then the 
received signal will be of the form
$$
(h_1x_1, h_1x_2, h_2x_3, h_2x_4)+(w_1, w_2, w_3, w_4).
$$
Assuming that  $h_i$ are non-zero 
with probability $1$ we can see that 
$$
\mathcal{A}_4^{(2)}=\{\mathrm{diag}[a_1, a_1, a_2,a_2] \mid |a_1\cdot a_2|=1, a_i\in \C\}.
$$
Following the proof of  \cite[Proposition 8]{LV2017} we get  the following result
\begin{equation}\label{twoblock}
||(x_1, x_2,x_3,x_4)||_{\mathcal{A}_4^{(2)}}^2= 2\sqrt{(|x_1|^2 +|x_2|^2) \cdot (|x_3|^2+|x_4|^2)}.
\end{equation}
Earlier we considered  the fast fading channel in which the channel can change during every time unit giving us the following received vector:
$$
(h_1x_1, h_2x_2, h_3x_3, h_4x_4)+(w_1, w_2, w_3, w_4).
$$
In this case we have
that
\begin{equation}\label{diagonalmatrix}
\mathcal{A}_4^{(3)} =\{\mathrm{diag}[a_1, a_2, a_3,a_4] \mid |a_1\cdot a_2\cdot a_3 \cdot a_4|=1, a_i\in \C\}.
\end{equation}
and that
\begin{equation}\label{fast}
||(x_1,x_2,x_3,x_4)||_{\mathcal{A}_4^{(3)}}^2 = 4|x_1x_2x_3 x_4|^{1/2}.
\end{equation}

\smallskip
In all the previous examples the channel could be represented as a diagonal action. 
On the other hand, for a $2 \times 2$ MIMO system, the channel matrix will have 
 block diagonal structure. In this case the received vector can be written as
$$
 (h_1x_1+h_2x_2, h_3x_1+h_4x_2, h_1x_3+h_2x_4, h_3x_3+h_4x_4)+(w_1, w_2, w_3, w_4).
$$
Here the set 
$\mathcal{A}_4^{(4)}$ 
consists of matrices 
$$
\left\{ \begin{pmatrix}
h_1&h_2&0&0\\
h_3&h_4&0&0\\
0&0&h_1&h_2\\
0&0&h_3&h_4\\
\end{pmatrix}
\mid \,\mathrm{det} \left|
\begin{pmatrix}
h_1&h_2\\
h_3&h_4
\end{pmatrix}
\right|=1 \right\}
$$

According to \cite[Proposition 8]{LV2017} we have that
\begin{equation}\label{mimo}
||(x_1,x_2,x_3,x_4)||_{\mathcal{A}_4^{(4)}}^2=2|(x_1x_2-x_3x_4)|.
\end{equation}

We immediately note that all the reduced norms  share common characteristics.
\begin{definition}
A continuous function $F$: $\C^k \to \R$
is called a \emph{homogeneous form} of degree $\sigma>0$ if it satisfies the relation
$$
|F(\alpha {x})|=|\alpha|^{\sigma} |F(x)|\quad (\forall \alpha \in \R, \forall x \in \C^k).
$$
\end{definition}

Given a full lattice $L\in \C^k$ and assuming that $\Vol(L)=1$, we can  define the \emph{homogeneous minimum} of the form $F$ as 
\[
\lambda(F, L)=\inf_{x \in L \setminus \{ 0\}} \abs{F(x)}.
\]

Setting 
$||\,\, ||_{A_4^{(i)}}^2=F_{\mathcal{A}_4^{(i)}}$, 
we can see that each of the squared reduced norms defined previously are homogeneous forms of degree $2$.

As we saw in Theorem \ref{main}, given a sequence of random matrices $H_k$ of size $k \times k$ and the corresponding sets $\mathcal{A}_k$ in (\ref{normalized}), we can use $\mathrm{rh}_{\mathcal{A}_k}$ as a design criterion for building capacity-approaching lattice codes. In many cases of interest, $\norm{\;}_{\mathcal{A}_k}^2=F_{\mathcal{A}_k}$ will be a homogeneous form and $\mathrm{rh}_{\mathcal{A}_k}(L)=\lambda(F,L)$. For instance this is the case if we extend the previous examples to general size $k$ and define
\begin{align*}
& \mathcal{A}_k^{(1)}={I_k},\\
& \mathcal{A}_{2k}^{(2)}=\{\mathrm{diag}[a_1, a_1,a_2,a_2,\ldots, a_k,a_k] \mid |a_1 a_2 \cdots a_k|=1, a_i\in \C\},\\
& \mathcal{A}_k^{(3)} =\{\mathrm{diag}[a_1, a_2, \ldots ,a_k] \mid |a_1 a_2\cdots a_k|=1, a_i\in \C\}.
\end{align*}

In the case where $\mathcal{A}_k=\{I_k\}$, 
 we have recovered the classical connection between sphere packing and AWGN capacity, but we also proved that there exist similar connections between different channel models and the corresponding homogeneous forms.

A natural question is now how close to capacity we can get with these methods by taking the best possible lattice sequences in terms of their homogeneous minimum. We will denote with $\mathcal{L}_{k}$ the set of all the lattices $L$ in $\C^k$ having $\Vol(L)=1$.
This leads us to the concept of  \emph{absolute homogeneous minimum} 
$$
\lambda(F)=\sup_{L\in \mathcal{L}_{k} }\lambda(F,L).
$$
Finding the value of absolute homogeneous minima
is one of the central problems in geometry of numbers. As we saw earlier it is a central problem also in the theory of linear fading channels.

In the case $\mathcal{A}_k=\{I_k\}$, $\lambda(F_{\mathcal{A}_k})$ is the \emph{Hermite constant} $\gamma_k$.  The value of the Hermite constant 
for different values of $k$ has been studied in mathematics for hundreds of years and there exists an extensive literature on the topic. In particular good  upper and lower bounds are available and it has been proven that we can get quite close to Gaussian capacity with this approach \cite[Chapter 3]{CS}. 

In the case of $F_{\mathcal{A}_k^{(3)}}$
the problem of finding homogeneous minima has been considered in the context of algebraic number fields and some upper bounds have been provided.  Similarly for 
$F_{\mathcal{A}_k^{(2)}}$
there exists considerable literature. These and related results can be found in \cite{GL}. However for the case of homogeneous forms arising from block diagonal structures
there seems to be very little previous research.

\begin{remark}
While the definition of the reduced Hermite invariant is very natural, we have  found very few previous works considering similar concepts. The first reference we have been able to locate is  \cite{Skriganov}. There the author considered matrices of type \eqref{diagonalmatrix} and proved  \eqref{reduced} in this special case. Our results  can therefore be seen as a   natural generalization of this work.  The other relevant reference is  \cite{bayer} where the authors  defined the Hermite invariant for generalized ideals in division algebras in the spirit of \emph{Arakelov theory}. Again their definition is  analogous to ours in certain special cases. 
\end{remark}

\begin{remark}
We note that the reduced norms in our examples are not only homogeneous forms, but  multivariate polynomials and 
 the sets $\mathcal{A}_k^{(i)}$ are  groups. As we obviously have that
$$
||A(x)||_{\mathcal{A}_k^{(i)}}^2=||x||_{\mathcal{A}_k^{(i)}}^2,
$$
for any $A\in \mathcal{A}_k^{(i)}$, we can see that $||\,\,||^2_{\mathcal{A}_k^{(i)}}$ is actually a classical polynomial invariant of the group
$\mathcal{A}_k^{(i)}$.   At the moment we don't know what conditions a matrix group $\mathcal{A}_k$ should satisfy so that  the corresponding reduced norm would be a homogenous form. Just as well we don't know when  some power  of the reduced norm is   a polynomial.
\end{remark}




{\small


\begin{thebibliography}{10}

\bibitem{bayer} E. Bayer-Fluckiger, J.-P. Cerri and J. Chaubert, ``Euclidean minima and central division algebras'', {\it International Journal of Number Theory}, Vol. 5, No. 7, pp. 1155--1168, 2009.


\bibitem{FOV} E. Bayer-Fluckiger, F. Oggier, and E. Viterbo,
``Algebraic Lattice Constellations: Bounds on Performance'', {\it
 IEEE Trans. Inform. Theory}, vol. 52, n. 1, pp. 319--327,
 Jan. 2006.



\bibitem{BERB} J. Boutros, E. Viterbo, C. Rastello and J.-C. Belfiore,
``Good lattice constellations for both Rayleigh fading and Gaussian channels,'' \emph{{IEEE} Trans. Inf.
  Theory}, vol. 52, no. 2, March 1996.

\bibitem{CLB}  A. Campello, C. Ling and J.-C. Belfiore, ''Algebraic Lattice Codes Achieve the Capacity of the Compound Block-Fading Channel'', preprint, available at
http://arxiv.org/abs/1603.09263 


\bibitem{CS} J.H. Conway and N.J.A. Sloane, {\it Sphere Packings, Lattices and Groups}, Springer-Verlag, New York, 1988.

\bibitem{DeBuda} R. de Buda, \vv{Some optimal codes have structure}, \emph{{IEEE} J. Select. Areas Commun.}, vol. 7, pp. 893--899, Aug. 1989.

\bibitem{Erez_Zamir} U. Erez and R. Zamir, \vv{Achieving $1/2 \log(1+SNR)$ on the AWGN channel with lattice encoding and decoding}, \emph{IEEE Trans. Inf. Theory},
vol. 50, no. 10, pp. 2293--2314, oct. 2004.

\bibitem{GL}  P. M. Gruber and C. G. Lekkerkerker, \emph{Geometry of Numbers}, Elsevier, Amsterdam,
The Netherlands,  1987.

\bibitem{Hajir_Maire} F. Hajir and C. Maire, \vv{Asymptotically good towers of global fields}, \emph{Proc. European Congress of Mathematics}, pp. 207--218, Birkh\"auser Basel, 2001.


\bibitem{HNGLOBECOM} A. Hindy and A. Nosratinia, ``Approaching the ergodic capacity with lattice coding,''
\emph{IEEE Global Communications Conference} (GLOBECOM), pp.1492--1496, Austin, USA,  Dec. 2014.

\bibitem{HN} A. Hindy and A. Nosratinia,  \emph{Lattice Coding and Decoding for Multiple-Antenna
Ergodic Fading Channels}, \emph{IEEE Trans. Commun.}, vol. 65, no. 5, May 2017.

\bibitem{LT}  S.N. Litsyn, M.A. Tsfasman, ``Constructive high-dimensional sphere packings'', \emph{Duke Math. J.} 54 (1987), no. 1, pp. 147--161.

\bibitem{Cong} L. Liu and C. Ling, ``Polar Codes and Polar Lattices for Independent Fading Channels'', \emph{IEEE Trans. Commun.}, Vol. 64, pp. 4923--4935, Sept. 2016.

\bibitem{ISIT2015_SISO} R. Vehkalahti and L. Luzzi, ``Number field lattices achieve Gaussian and Rayleigh channel capacity within a constant gap'', in \emph{IEEE Int. Symp. Inform. Theory} (ISIT), pp. 436--440,   Hong Kong, China, June 2015. Appendix available in the extended version at http://arxiv.org/abs/1411.4591v2

\bibitem{LV2014} L. Luzzi, R. Vehkalahti, ``Division algebra codes achieve MIMO block fading channel capacity within a constant gap'', \emph{IEEE Int. Symp. Inf. Theory} (ISIT),  Hong Kong, China, June 2015.

\bibitem{LV2017} L. Luzzi, R. Vehkalahti, \vv{Almost universal codes achieving ergodic MIMO capacity within a constant gap}, \emph{IEEE Trans. Inform. Theory}, vol. 63, pp. 3224--3241, May 2017.

\bibitem{Martinet} J. Martinet, \vv{Tours de corps de classes et estimations de discriminants}, \emph{Invent. Math.} n. 44, 1978, pp. 65--73.


\bibitem{Nark} W. Narkiewicz, ``Elementary and Analytic Theory of Algebraic Numbers'',
Springer, Berlin, 1980.

\bibitem{Odlyzko}
A.~M. Odlyzko,
\newblock  ``Bounds for discriminants and related estimates for class numbers,
  regulators and zeros of zeta functions: a survey of recent results'', 
\newblock \emph{S\'em.\ Th\'eor.\ Nombres Bordeaux }, vol. 2, no.\ 1, pp.  119--141, 1990.

\bibitem{Oggier} F. Oggier, ``Algebraic methods for channel coding'', PhD thesis, EPFL, Lausanne,  2005.

\bibitem{OE} O. Ordentlich and U. Erez,    ``Precoded Integer-Forcing Universally Achieves the MIMO Capacity to Within a Constant Gap'', {\it IEEE Trans.  Inf. Theory}, vol. 61, no.1  pp. 323--340,   Jan. 2015.

\bibitem{Skriganov} M. M. Skriganov,  ``Constructions of uniform distributions
in terms of geometry of numbers'', \emph{Algebra i Analiz}, Volume 6, Issue 3, pp. 200--230, 1994.


\bibitem{TSC} V. Tarokh, N. Seshadri, and A.R. Calderbank, ``Space-Time Codes for High Data Rate Wireless Communications: Performance Criterion and Code Construction'', {\it IEEE Trans. Inf. Theory}, vol. 44, pp. 744--765, March 1998.

\bibitem{ISIT2015_SISO} R. Vehkalahti and L. Luzzi, ``Number field lattices achieve Gaussian and Rayleigh channel capacity within a constant gap'', in \emph{IEEE Int. Symp. Inform. Theory} (ISIT), Hong Kong, China, June 2015.


\bibitem{Vituri} S. Vituri, ``Dispersion Analysis of Infinite Constellations in Ergodic Fading Channels'', 2015, available at http://arxiv.org/abs/1309.4638v2. This is a revised and extended version of the author's M.S. thesis, Dept Electr. Eng., Tel Aviv Univ., Israel, 2013 


\bibitem{WF} N. Weinberger and M. Feder, ``Universal decoding for linear Gaussian fading
channels in the competitive minimax sense'',\emph{IEEE Int. Symp. Inf. Theory} (ISIT) Toronto, Canada, July  2008.

\bibitem{Xing} C. Xing,  ``Diagonal Lattice Space-Time Codes From Number Fields and Asymptotic Bounds'',  {\it IEEE Trans.  Inform. Theory}, vol.53, pp. 3921--3926,  November 2007.


\bibitem{Zim}  R. Zimmert, ``Ideale kleiner Norm in Idealklassen und eine Regulatorabsch{\"a}tchung'',  \emph{Invent. Math.} n. 62, pp. 367--380, 1980.





































 































\end{thebibliography}
\end{document}